\newcommand{\subtitle}[1]{%
  \posttitle{%
    \par\end{center}
    \begin{center}{\it #1}\end{center}
    \vskip0.5em}%
}
\newtheorem{theorem}{Theorem}
\newtheorem{lemma}[theorem]{Lemma}
\numberwithin{equation}{section}
\newcommand{\R}{{\mathbb R}}         
\newcommand{\Z}{{\mathbb Z}}
\renewcommand{\a}{\alpha}
\newcommand{\g}{\gamma}
           \newcommand\om{\omega}
\newcommand{\pa}{\partial}
\newcommand{\D}{\Delta}
\let \e=\varepsilon
\let\pa=\partial
\def\a{\alpha}
\def\b{\beta}
\def\x{\xi}
\def\be{\begin{equation}}
\def\ee{\end{equation}}
\def\bea{\begin{eqnarray}}
\def\eea{\end{eqnarray}}
\def\nn{\nonumber}
\def\b{\beta}
\numberwithin{theorem}{section}
\begin{document}\title{Macroscopic description of microscopically strongly
inhomogenous systems}
\subtitle{\small \it A mathematical basis for the synthesis of higher gradients metamaterials}
\author{{A. Carcaterra$^{1,2,3}$, F. dell'Isola$^{4,3}$, R. Esposito$^3$ and M. Pulvirenti$^{5,3}$}}
\date{}
\maketitle
\footnotetext [1]{{Dipartimento di Ingegneria Meccanica ed Aeronautica, Universit\`a di Roma La Sapienza, Via Eudossiana 18, 00184, Rome, Italy}}
\footnotetext[2]{{CNIS-Centro per le Nanotecnologie Applicate all'Ingegneria, Piazzale Aldo Moro, 5 00185 Roma, Italy}}
\footnotetext [3]{International Research Center M\&MOCS, Universit\`a dell'Aquila, Palazzo Caetani, Cisterna di Latina, (LT) 04012 Italy}
\footnotetext [4]{{Dipartimento di Ingegneria Meccanica e Aerospaziale, Universit\`a di Roma La Sapienza, Via Eudossiana,18, 00184, Roma, Italy} }
\footnotetext [5]{Diparimento di Matematica, Universit\`a di Roma La Sapienza, Piazzale Aldo Moro 5,  00185 Roma Italy}

\begin{abstract}
We consider the time evolution of a one dimensional $n$-gradient continuum. Our aim is to construct and analyze discrete approximations in terms of physically realizable mechanical systems, called microscopic because  they are  living on a smaller space scale.
We validate our construction by proving a convergence theorem of the microscopic system to the given continuum,  as the scale parameter goes to zero.
\end{abstract}

\section{Introduction}

Continua with exotic behaviors are acquiring an increasing attention for their interest in technological applications (see e.g. \cite{En,sep,Ma,Al,Pic,ZS}
and references therein).
In this paper we address what, in a sense, is an inverse problem: given a
continuum model we seek for those mechanical systems which, at a certain
length scale, behave as specified by the chosen continuum model. The aim is to understand the microscopic properties of such systems to obtain information on how to realize (synthesize) them, at least in principle.

To be more precise, we are interested in a metamaterial which, roughly speaking, is an array of elementary individuals, much smaller than the typical macroscopic size, arranged in periodic structures and exhibiting unusual macroscopic behavior.

In our mathematical analysis we want to consider such a continuous system as described by a partial differential equation generated by a Lagrangian which summarizes all the macroscopic properties we may desire.
Then we discretize this system  and manage to identify such a discretization as a real conservative mechanical model. In other words we start from a macroscopic behavior and describe one possible microscopic interaction which realizes it at a macroscopic level.  Finally we give a mathematical foundation to this procedure by proving a convergence result.

From a mathematical point of view, we underline once more that this is  an inverse problem, compared to the one (largely unsolved) formulated by D. Hilbert in his famous speach in 1900 at ICM in Paris (see \cite{H}) in which he encouraged to prove rigorously the transition from particle systems to fluid dynamics (Hilbert's 6-th problem). 
However it is worth to stress that we are working in the framework of continuum mechanics,  {but} our microscopic elements, even if small in macro unities, are large compared with molecular scales. 

We conclude this introduction by spending some more words on metamaterials, collocating them in the framework of generalized continua, with a particular emphasis to the pioneering work of G. Piola (see \cite{del14,P,Au}).

The rest of the paper is organized as follows. In Section 2 we introduce continuous and discrete Lagrangians and discuss the identification problem, namely we specify the mechanical systems outlined by the discretization procedure. In Section 3 we formulate and solve the associated convergence problem. 

We remark that our work concerns one-dimensional systems only. This is of course a severe limitation, but, on the other hand, it is a natural setting to start with. 

\subsection{Mechanical metamaterials}

By suitably rephrasing Engheta and Ziolkowski \cite{En} and Zouhdi et al. \cite{ZS}, metamaterials are materials which are first theoretically conceived and then
engineered to have properties very unlikely to be found in nature.

They are obtained by suitably assembling multiple individual elements
constructed with already available microscopic materials, but usually arranged
in (quasi-)periodic sub-structures. Indeed the properties of metamaterials do
not depend only on those of their component materials, {but also on 
 the topology of their connections and the nature of their mutual interaction forces.}
In literature it is currently specified a particular class of metamaterials,
so called mechanical metamaterials, those in which the particular properties
which are \textquotedblleft designed\textquotedblright\ for the newly
synthesized material are purely mechanical. The present paper deals exactly
with such a class.

We explicitly remark here that in the present paper we use the adjective
\textquotedblleft microscopic\textquotedblright\ or \textquotedblleft
micro-\textquotedblright\ meaning all those length scales which are (much) smaller
than the scale at which continuum mechanics is applicable. In particular we do
not attach any value in SI units to each considered length scale.

The particular shape, geometry, size, orientation and arrangement of
the elementary individual elements can affect, for instance, the propagation of waves
of light or sound in a not-already-observed manner. In this way one can
create material properties which cannot be found in conventional 
materials.

Particularly promising are those micro-structures which present high-contrast
in microscopic properties. These structures, once homogenized, have shown to
produce generalized continua (see e.g. \cite{Ch,Al,sep}). 
These micro-structures, although remaining quasi-periodical, are conceived so that some of the physical properties which
are characterizing their behavior are diverging when the size of the 
representative elementary volume tends to zero, while simultaneously some
others are vanishing in this limit.

To give a hint of the possible applications of newly designed metamaterials we
list here some among the papers which are more relevant to our results,
especially in the perspective of their extension to 2D and 3D systems.
In \cite{LPSW} it is shown how to synthesize a composite medium
exhibiting negative effective bulk modulus, negative effective mass density
(see also \cite{Ch}), or both properties. 
In \cite{K} materials with negative Poisson's ratio (auxetics) was
designed, and they were fabricated in 1999 (see Xu et al.\cite{XB} ). One of the most famous examples of such materials is the Goretex whose negative Poisson ratio opened unexpected possibility to e.g.
vascular surgery.

{The damping effects can be also suitably designed using special selection of the material microstructure as reported in \cite{a,b}, or the acoustic and optical effects as negative refraction, lensing and cloaking \cite{c,d}. }

All described materials can be modeled at a micro-level as finite dimensional
Lagrangian systems and their effective properties were all obtained via a kind
of homogenization procedure.

\bigskip

\subsection{Generalized Continua}

In the first half of XIX century the design of structures became an
intellectual activity based on the rigorous application of predictive
mathematical models. These models were formulated by means of a precise
postulation process and originated a series of problems or exercises directly
motivated by the engineering applications, which were solved by means of the
use of the then newly developed techniques of mathematical analysis.

The model describing the mechanical behaviour of materials introduced by
Cauchy - although very accurate for a large class of phenomena - cannot
be applied to all materials in every physical condition.

More general models were formulated by Gabrio Piola in the same years, but only recently they were considered in engineering for applications. 

In some formulations of continuum mechanics,  the possibility of the dependence of deformation energy on higher gradients of displacement, is rejected, due to an apparent (see \cite{dels95}) incompatibility with the second principle of thermodynamics (\cite{DS},\cite{Gurtin}). On the other hand, physicists, for instance  Landau \cite{L},  always considered this dependence as admissible, as they are
accustomed to base the postulation of physical theories on the principle of
least action or on the principle of virtual works, which is exactly the same starting point of G. Piola \cite{P}.

Actually, when introducing Piola continua, the true conceptual frame settled by
Cauchy, Navier and Poisson is to be drastically modified. The concept of
stress becomes secondary and the main role is played by deformation measures
together with action and dissipation functionals. The Euler-Lagrange equation
obtained in this more encompassing modeling process cannot be anymore
regarded to coincide with the balance of force unless one generalizes the
concept of force. This can be done by introducing generalized actions as the dual quantities in
the work of the gradients of displacements (see e.g. \cite{GR,MC,Mi,T1,Ge,del12,Ne}).

Actually the same concept of contact interaction has to be completely
modified, and the crucial point of determining the correct boundary conditions
which can be assigned in generalized continua theory has been addressed only
very recently (see e.g. \cite{del12}), following the original
ideas by Piola \cite{P}.

\section{Microscopic and Macroscopic descriptions 
}

In what follows we will consider  two length scales $l$ and $L$ with
$l\ll L.$ We will call microscopic or micro the description 
at the length scale $l$, while macroscopic or macro will be the attribute
relative to the description which is suitable at the lenght scale $L.$

We assume that the most suitable micro-description at micro-scale is ``discrete''  i.e. based on the model ``material particle'' (as done by Poisson,
Navier and -in some works- by Piola), while the description which has to be
used at the macro-level is that of a continuum, as introduced e.g. by
Lagrange, Cauchy or again Piola.

Remark however that we will not limit our attention to systems which verify
the assumptions put forward by Cauchy and Navier. We will
consider, actually, those continua which have been considered by Piola (and
then by many others, including Toupin, Green, Rivlin and Mindlin)  i.e. so
called higher gradient continua. 

To quantify the above considerations we will introduce, in the sequel, a small parameter $\e>0$ indicating the ratio between typical micro and macro scales, possibly to be sent to zero to outline a suitable asymptotic behavior.

\subsection{The basic macroscopic continuous model}

Let $I=(0,L)\subset \R$ be a finite interval assumed as reference configuration of the considered one-dimensional continuum. We label each element of the continuum with the coordinate $x\in I$ of  its placement in the reference configuration. The actual configuration of the continuum is described by the displacement field $u=u(x,t)$ which represents the horizontal displacement at time $t$ of the element $x$ from its position in the reference configuration.

Fixed an integer $n\ge 1$, for such a system we introduce  the Lagrangian 

\be\label{contlagr}
\mathcal{L}(u,\dot u)=\frac 1 2 \int_I |\dot u(x)|^2 -\int_I  \Phi(u(x), Du(x), \dots,D^nu(x)).
\ee
Here, $D^k u$
is the $k$-th $x$-derivative of $u$ and 	
\be \R^{n+1} \ni \underline{\x}=(\x_0, \dots,\x_n)\mapsto \Phi(\underline{\x})\in \R\ee
is a function whose properties will be specified later on.

Note that $\Phi(u,Du,\dots,D^nu)$ is the potential energy density corresponding to the dispacement $u$ and describes the constitutive properties of the medium under investigation.

The action on the time interval $(0,T)$  is consequently defined as
\be\label{contaction}\mathcal{A}=\int_0^T \mathcal{L}(u(\,\cdot\, ,t),{\dot u}(\,\cdot\, ,t)),\ee
where $\dot u(x,t)= \partial_tu(x,t)$ is the time derivative.

To deduce the Euler-Lagrange equations from the stationary action principle, we have first to specify the kinematic boundary condition for our problem. In the sequel we shall assume either
\begin{itemize}
\item periodic boundary conditions. Namely the reference configuration is $\mathcal{C}$, a circle of radius $\frac L{2\pi}$ (the points $0$ and $L$ are identified),

or

\item Dirichlet boundary conditions. Namely $u$ and its first $n-1$ derivatives vanish at $0$ and $L$.
\end{itemize}

With above boundary conditions no boundary terms appear when performing the integrations by parts needed to obtain the equation of the motion (\ref{condeq1}) below.

Note also that the maximal order of the spatial derivatives appearing in the equation of motion (\ref{condeq1})
is $2n$.

The equation of motion, as a consequence of the stationary action principle and the boundary conditions, is (with $D^0 u=u$)
\be\label{condeq1} \ddot u=-\sum_{\a=0}^n(-1)^\a D^\a\partial_{\x_\a}\Phi(u,Du,\dots,D^n u).\ee

We could also include, in the present context, a given external potential with a very minor effort. We avoid to do so for notational simplicity.

\bigskip

Now we specify $\Phi$ by assuming that 
\be\label{poten0}
\Phi(\underline{\x})=\frac 1 2 (\underline{\x},Q\underline{\x})+R(\underline{\x})
\ee
i.e. the quadratic part of $\Phi$ is a quadratic form in terms of the displacement and its derivatives, contained in the vector $\x$. 
$Q=\{Q_{\a,\b}\}_{\a,\b=0}^n$ is a symmetric (without loss of generality) constant matrix with $Q_{n,n}\neq 0$.  

On the non-linear part $R$ we shall do suitable assumptions later on. We start by requiring that
\be\label{assR0}R(\underline{0})=0, \quad R(\underline{\x})=O(|\x|^3),\ee
 namely the quadratic part of the interaction is fully expressed by the matrix $Q$. 
 
The fact that $\Phi$ is not depending explicitly on $x$ is consequence of the macroscopic homogeneity of the continuum (although it may be strongly inhomogeneous at microscopic scales). This implies that $Q$ is constant.

As a first step we show that, in contrast with the fairly generality of the model, the quadratic part can be considerably simplified. Indeed, symmetrizing, integrating by parts and using the periodic or Dirichlet boundary conditions, we get:
\begin{eqnarray}\label{ridQ}\mathcal{U}&\doteq&\frac 1 2 \sum_{\a,\b=0}^nQ_{\a,\b}\int _I D^\a u D^\b u\notag\\
&=&\frac 1 4 \sum_{\g=0}^{2n}\sum_{\substack  {\alpha, \beta\ge0 : \\ \alpha +\beta=\gamma} }
Q_{\a,\b} \int_I uD^\g u\big[(-1)^\a+(-1)^\b\big]\notag\\
&=&\frac 1 4 \sum_{\g=}^{n}\sum_{\substack  {\alpha, \beta\ge 0 : \\ \alpha +\beta=2\gamma} }
Q_{\a,\b} \big[(-1)^\a+(-1)^\b\big](-1)^\g\int_I |D^\g u|^2\\\notag
&=&\frac 1 2 \sum_{\g=0}^n A_\g \int_I|D^\g u|^2,\notag
\end{eqnarray}
where
\be\label{defA}A_\g=\frac 1 2 \sum_{\substack  {\alpha, \beta\ge 0 : \\ \alpha +\beta=2\gamma} }
Q_{\a,\b} \big[(-1)^\a+(-1)^\b\big](-1)^\g.\ee

{Note that in the first step in (\ref{ridQ}) we have used the symmetry of $Q_{\a,\b}$ and in the second step  we used that $(-1)^\a+(-1)^\b=0$ if $\a+\b$ is odd.  In the third step we have again integrated by parts.}

As a consequence of this analysis, without loss of generality, we can assume $\Phi$ of the form
\be\label{potenergyf}\Phi=\frac 1 2 \sum_{\a=0}^n A_\a |\x_\a|^2 +R(\underline{\x}),\ee
with $A_n\ne 0$ {and the equations of motion are
\be\label{condeq3} \ddot u=-\sum_{\a=0}^n(-1)^\a A_\a\Delta^\a u-\sum_{\a=0}^n(-1)^\a D^\a\partial_{\x_\a}\R(u,Du,\dots,D^n u),\ee
where $\Delta=D^2$ denotes the Laplacian. Note that in the linear part only even derivatives are allowed. }

\subsection{Formal discretization}
In view of the construction of the mechanical (microscopic) system with a finite number of degrees of freedom, we introduce a finite lattice of mesh $\e$ in $I$. The lattice points are $\{0,\e,2\e\dots,k\e,\dots N\e\}$ with the obvious condition $N\e=L$. When considering periodic boundary conditions we clearly identify $0$ with $\e N$. 

We associate to each lattice point a {\it microscopic} particle of unitary mass labelled by the index $i\in\{0,\dots,N\}$ and denote by $u_i$ the displacement of the particle $i$ from the reference position $i\e$. The array $u_\e=\{u_i\}_{i=0}^N$ is the discretized displacement field.

The discretized Lagrangian takes the form
\be\label{discrLagra0} \mathcal{L}_\e(u_\e,\dot u_\e)=\frac 1 2 \sum_{i=0}^N\e \dot u_i^2-U(u_\e),\ee
where
\be\label{discrpot}U(u_\e)=\sum_{i=0}^N\e \Big[\frac 1 2 \sum_{\a=0}^n A_\a|(D_\e^\a u_\e)_i|^2 +R((\underline{D_\e u_\e})_i)\Big],\ee
where $(\underline{D_\e u_\e})_i=\{(D^\a_\e u_\e)_i\}_{\a=0}^n$, 
\be\label{defDn}
D_\e^\a u= \begin{cases}
\Delta_\e^{\frac\a 2} u_\e,   &\a \text{ even},
\\\\
D_\e^+\Delta_\e^{\frac{\a-1}2} u_\e,  &\a\text{\ odd} . 
\end{cases}\ee
Here $D^+_\e$ and $D^-_\e$, defined as
\be\label{defDe}(D^+ u_\e)_i=\frac{u_{i+1}-u_i}{\e},\quad (D^- u_\e)_i=\frac{u_{i}-u_{i-1}}{\e},\ee
are the right and left discrete  derivatives respectively and $\Delta_\e$, defined by 
\be\label{discrlapla}
 (\Delta_\e u_\e)_i= (D^+_\e D^-_\e u_\e)_i=(D^-_\e D^+_\e u_\e)_i=\frac 1 {\e^2}(u_{i+1}+u_{i-1}-2u_i).\ee
is the discrete Laplacian.

To complete the above definitions we need to define the discrete derivatives at the boundary. For periodic boundary conditions it is enough to use the following convention: for any $k\in \Z$, 
\be\label{percond}u_{N+k}=u_k.\ee

For Dirichlet boundary condition, 
we have to think of the first and last $n$ particles frozen in their reference position.
Hence we assume the constraints
\begin{equation}\label{constr} u_i=0, \quad i\in\{0,\dots,n-1\}\cup\{N-n+1,\dots, N\}.\end{equation}
%
The equations of motion are
\be\label{newt} \ddot u_i= F_i, \quad F_i=-\frac{\partial U}{\partial u_i},\ee
with the index $i$ running from $1$ to $N$ in the periodic case and on the set of $i$'s for which $u_i$ is not constrained in the Dirichlet case.
 We notice that the choice of the right derivative (as well as any other possible discretization) is arbitrary. The only restriction that we have is the mechanical realizability (in principle) of this system. We are going to discuss this point in the next subsection.

We finally remark that $F_i$ depends on $u_j$, with $|i-j|\le n$. However this is an almost local contribution because $n$ is fixed and those $u_j$'s influencing $u_i$ are at macroscopic distance $O(\e)$.

\bigskip
\subsection{Realizable syntheses}

The aim of this subsection is to show that, at least in the simplest case of linear forces, the above introduced discrete system corresponds to a system of particles interacting via two-body forces of  range not larger than $n$. Therefore, it can be  realized by suitably assembling mechanical elements.

Let us consider the linear system  introduced in (\ref{condeq1}) with $R=0$ and its discrete counterpart (\ref{newt}).
It can be checked that
\be\label{F}F_i=- \sum_{k=0}^{n} (-1)^kA_{k}\D_\e^k u_\e(x_i).\ee
Therefore, the force acting on the particle $i$ is expressed as a linear combination of discrete derivatives up to the order $2n$. 

We want to show that $F_i$ can be interpreted as the result of the action of a system of linear pairwise forces with suitable range. More precisely, we want to find $\e$-dependent coefficients $k_{i,j}$ such that 
\be \label{Fe} F_i=\sum_j k_{i,j}(u_j-u_i)\ee
and hence
\be U(u_1,\dots ,u_N)= \frac 1 2\sum_{i,j=1}^N k_{i,j}(u_i-u_j)^2.\ee

We prove below that for  any $p$, 
\begin{equation}\label{recur} (\D^pu)_i=\sum_j K^p_{i,j}(u_j-u_i),\end{equation}
with $K^p_{i,j}$ other suitable constants.
Once (\ref{recur}) is proved, we can conclude that (\ref{Fe}) holds with
\be\label{kij} k_{i,j}=\sum_{p=0}^n (-1)^pA_pK^p_{i,j}.\ee

Note that the  constants $k_{i,j}$ are not necessarily all positive 
even if the $A_{\alpha}$ are all positive.

The constants $K^p_{i,j}$ are given by the recursive equation (\ref{Kij}) below. It implies that, for any $p$, $K^p_{i,j}$ vanishes for $|i-j|>p$, thus $k_{i,j}=0$ if $|i-j|>n$. Moreover, in the periodic case $K^p_{i,j}$ depends only on the difference $i-j$ and is symmetric in the exchange $i\leftrightarrow j$ and hence  the {\it action-reaction} principle is satisfied.

We  prove  (\ref{recur}) by recurrence. 

For $p=1$, we have
\be\label{D02}(\D_\e u)_i=\e^{-2}(u_{i+1}+u_{i-1}-2u_i)= \e^{-2} (u_{i+1}-u_i) +\e^{-2} (u_{i-1}-u_i).\ee
Thus (\ref{recur}) is verified with 
\be \label{D2}K^{1}_{i,i+1}=K^{1}_{i,i-1}=\e^{-2}\quad \text { and } K^1_{i,j}=0 \text{ otherwise}.\ee
Suppose now that (\ref{recur}) is true for $p= \ell-1$:
$$( \D_\e^{\ell-1}u)_i=\sum_j K^{\ell-1}_{i,j}[u_j-u_i].$$
Then, 
\begin{multline}(\D_\e^{\ell} u)_i
=(\D_\e^{\ell-1}\D_\e u)_i=\sum_j K^{\ell-1}_{i,j}[(\D_\e u)_j- (\D_\e u)_i]\\
=\sum_j K^{\ell-1}_{i,j}[\e^{-2} (u_{j+1}-u_j) +\e^{-2} (u_{j-1}-u_j)-\e^{-2} (u_{i+1}-u_i) -\e^{-2} (u_{i-1}-u_i)]\\=\sum_j K^{\ell-1}_{i,j}[\e^{-2} (u_{j+1}-u_i) - \e^{-2}(u_{j}-u_i)+\e^{-2} (u_{j-1}-u_i)-\e^{-2}(u_{j}-u_i)\\-\e^{-2} (u_{i+1}-u_i) -\e^{-2} (u_{i-1}-u_i)].
\end{multline}
Using the change of index $j+1\to j$ in the first term and $j-1\to j$ in the second, we have
\begin{multline}(\D_\e^{\ell}u)_i =\sum_j K^{\ell-1}_{i,j-1}\e^{-2} (u_{j}-u_i) - K^{\ell-1}_{i,j}\e^{-2}(u_{j}-u_i)+K^{\ell-1}_{i,j+1}\e^{-2} (u_{j}-u_i)\\-\e^{-2}K^{\ell-1}_{i,j}(u_{j}-u_i)-\e^{-2} K^{\ell-1}_{i,j}(u_{i+1}-u_i) -\e^{-2}K^{\ell-1}_{i,j} (u_{i-1}-u_i)].
\end{multline}
Thus, (\ref{recur}) is verified with the following recursive definition of $K^{\ell}_{i,j}$:
\be\label{Kij}K^{\ell}_{i,j}=\e^{-2}\Big[K^{\ell-1}_{i,j-1}+K^{\ell-1}_{i,j+1}-2K^{\ell-1}_{i,j}-(\delta_{i+1,j}+\delta_{i-1,j})\sum_{j'}K^{\ell-1}_{i,j'}\Big],\ee
for $\ell>1$ and $K^1_{i,j}$ given by (\ref{D2}).

{Equations (\ref{Kij}) and (\ref{kij}) solve definitely the posed problem of identifying the topology of the microstructure connections, since they provide the  coefficients $k_{i,j}$ only in terms of the coefficients $A_p$ that characterize the continuous formulation of the macroscopic description of the elastic problem.}

\section{A rigorous result of convergence\label{Mathematical result}}

{In this section we prove a convergence result of the discrete model introduced in the previous section to the prescribed continuous systems in the limit as the scale parameter  goes to $0$.  We show the convergence of the solution of the discrete system to the continuous one in the energy norm of the system.
To clarify the argument without the use of cumbersome notation, we present first a paradigmatic case  for which we discuss both periodic  and Dirichlet boundary conditions. 
The more general case is considered in Subsection 3.2 where we give the convergence proof only in the periodic case although the argument can be straightforwardly extended to the Dirichlet boundary conditions as well.}

For the reader convenience we rewrite the Lagrangian we are going to consider in this Section, namely

\be
\label{contlag0}
\mathcal{L}(u,\dot u)=\frac 12 
\int_I dx  | \dot u(x,t)| ^2 - \frac 1 2  \sum_{\alpha=1}^n \int_I dx  | D ^\alpha u |^2 (x,t))^2 -\int_I R(u, Du, D^2 u \dots ).
\ee
As we shall see later on, we will con sider only nonlinear terms $R$ depending on $u$ and the first derivative only.

\subsection{The $\Delta^2$ case - dynamic Euler-Bernoulli beam: {``Elastica''}}\label{elast}

\subsubsection{Periodic boundary conditions}\label{elastper}
We consider the Lagrangian (\ref{contlag0}) with $A_0=A_1=0$ and $A_2=1$. Moreover we focus on the linear case $R=0$.  Thus we have the following linear initial value problem in the circle, $\mathcal{C}$:
\begin{equation}
\label{em}
\ddot u=-\frac {\pa ^4 u}{\pa x^4} :=-\Delta^2 u,
\end{equation}
\be\label{incond}
u(x,0)=u_0(x),\quad \dot u(x,0)=v_0(x).
\ee
It is well known that there exists a unique classical solution  as the initial data are assumed sufficiently smooth.

More precisely we assume that
\be
\label{ass0}
u_0 \in H^s, \quad v_0 \in H^r \quad \text {with} \quad s\geq 6, r\geq 4,
\ee
where $H^s$ denotes the Sobolev space endowed with norm
$$
\| u\|_{H^s} = \sum_{\ell=0}^s \| D^\ell u \|_2^2,
$$
and $\|\,\cdot\,\|_p$ is the $L^p(\mathcal{C})$-norm.

In this way, by using the well known energy method, we can prove the propagation (in time) of the $H^s$ regularity for $u$  and $\dot u$, yielding, in particular,   $u \in C^5(\mathcal{C})$ (as consequence of the obvious inequality $ \| u\|_{\infty} \leq C \| u \|_{H^1}$). 

Next we consider the mechanical system of $N$ particles, with coordinates $u_i$,  $i=1,\dots N$, whose Lagrangian is given by (\ref{discrLagra0}) again with  $A_0=A_1=0$, $A_2=1$ and $R=0$.
The equation of motion are explicitly 
\begin{equation}
\label{em1}
\ddot u_i=\frac 1 {\e^4} (-u_{i+2}+4 u_{i+1} -6u_i -u_{i-2}+4 u_{i-1})
\qquad i=1 \dots N,
\end{equation}
with the convention $u_{N+k}=u_k$ for any $k\in \Z$.

We want to compare the solutions of (\ref{em}) with the corresponding ones of  (\ref{em1}). To do this we first set
\begin{equation}\label{step}
u_\e (x,t)=u_i(t) \quad \text {if} \quad x\in [i\e, (i+1) \e ),\quad i\in\{1,\dots, N\}.
\end{equation}
 In other words we introduce a function $u_\e$ which is the step, left continuous, function (constant in the lattice interval) taking the value of the nearest left point of the lattice. Problem (\ref{em1}) is rephrased accordingly:
\begin{equation}
\label{em2}   
\ddot u_\e (x,t) =-\Delta_\e ^2u_\e(x,t)
\qquad x\in \mathcal{C},
\end{equation}
where
\begin{equation}\label{Delta}
\Delta_\e u(x)=D_\e^+D_\e^- u(x)
\end{equation}
\begin{equation}\label{D}
D^{\pm}u(x)=\pm \frac 1{\e} (u(x\pm\e)-u(x)).
\end{equation}
Notice that the Lagrangian (\ref{discrLagra0}), with $A_0=A_1=0$, $A_2=1$ and $R=0$, has the following continuous representation:
\begin{equation}
\label{conlagep}
\mathcal{L}(u_\e,\dot u_\e)=
\int_{\mathcal{C}} dx \Big[\frac 1 2 \dot u_\e(x,t)^2 - \frac 1 2 (\Delta_\e u_\e(x,t))^2\Big].
\end{equation}

We suppose that, at the initial time,  $u_\e, \dot u_\e$ are approximating  $u, \dot u$ in the sense that 
\begin{equation}
u_\e(x,0)=u_0(i\e) ,\quad \dot u_\e (x,0)=v_0 (i\e) \quad \text {if} \quad x\in [i\e, (i+1)\e ).
\end{equation}
Note that, by the conservation of the energy, we have
\be\label{EC}\mathcal{E}[u(t)]:=\frac 12\int_{\mathcal{C}}dx \Big[|\dot u(t)|^2 + |\Delta u(t)|^2\Big]=\mathcal{E}[u(0)], \ee
as well as
\be\label{ECD}\mathcal{E}_\e[u_\e(t)]:=\frac 12\int_{\mathcal{C}}dx \Big[|\dot u_\e(t)|^2 + |\Delta_\e u_\e(t)|^2\Big]= \mathcal{E}_\e[u_\e(0)]. \ee

Next we introduce the following function which controls the deviation of $u_\e$ from $u$:
\begin{equation}
\label{W}
W_\e (t)= \frac 12 \int_ \mathcal{C}dx\Big[(u_\e (x,t)- u (x,t))^2+ (\dot u_\e (x,t)-\dot u (x,t))^2+[\Delta_\e ( u_\e (x,t)- u (x,t))]^2\Big].
\end{equation}
Computing the time derivative and using the equation of motion we get
\begin{eqnarray}\label{wdot}
\dot W_\e(t)=&& \int_\mathcal{C} dx (\dot u_\e (x,t)-\dot u (x,t)) (u_\e (x,t)- u (x,t)+\ddot u_\e (x,t)-\ddot u (x,t))+\nn\\ \nn
&&
\int_\mathcal{C} dx  \Delta_\e ( u_\e (x,t)- u (x,t)) \Delta_\e ( \dot u_\e (x,t)-\dot u (x,t))= \\ 
&&
+\int_\mathcal{C} dx (\dot u_\e (x,t)-\dot u (x,t))(u_\e (x,t)- u (x,t))\nn\\&&
-\int_\mathcal{C} dx (\dot u_\e (x,t)-\dot u (x,t)) \Delta_\e^2( u_\e (x,t)- u (x,t))\big]\\ \nn
&&
+\int_\mathcal{C} dx (\dot u_\e (x,t)-\dot u (x,t)) (\Delta^2 u (x,t)- \Delta_\e^2 u (x,t))\\ \nn
&&
+\int_\mathcal{C} dx  \Delta_\e ( u_\e (x,t)- u (x,t)) \Delta_\e ( \dot u_\e (x,t)-\dot u (x,t)) \nn.
\end{eqnarray}
Now consider the following discrete integration by parts formula, namely
\be\label{intbypartsper}
\int_\mathcal{C}  f(x) D^\pm_\e g(x) = -\int_\mathcal{C} D^\mp_\e f(x)  g(x)
\ee
valid for any couple of bounded functions $f$ and $g$. 

 If we apply the above formula twice
we conclude that the second and fourth terms in (\ref{wdot}) cancel each other. On the other hand the first term is bounded by 
\[\frac 1 2 \int_{\mathcal{C}} dx |\dot u-\dot u_\e|^2 +|u-u_\e|^2\le W.\]
The third term is bounded by
\[
\frac 1 2\int_\mathcal{C} dx (\dot u_\e (x,t)-\dot u (x,t))^2+
\frac 1 2\int_\mathcal{C}  |(\Delta^2 - \Delta_\e^2) u (x,t)|^2 .
\]
Now the first term of the expression above is bounded by  $W$.  The second one, by the regularity of $u$ and its derivatives up to the fifth order,  is bounded, uniformly in $x\in \mathcal{C}$ and in $t$ in any bounded interval, by a constant $\om_\e$ vanishing as $\e \to 0$.  Here and in the rest of the paper $\omega_\e\in \R$ denotes such a generic infinitesimal constant.

\noindent In conclusion, by the Gronwall lemma,
\be
W_\e (t) \leq W_\e (0)\text{\rm e}^{2t} +  \om_\e  t \text{\rm e}^{2t}
\ee
so that  $W_\e (t)$ is vanishing, because $W_\e(0)\to 0$ by the regularity of $u$ and the assumptions on initial data.

We summarize above discussion in the following 
\begin{theorem}\label{elasticaper}
Suppose that $u_0$ and $v_0$  satisfy \eqref{ass0}.  Let $u(t)$   be the solution to (\ref{em}) and $u_\e(t)$ be the step function defined by (\ref{step}) with $u_i(t)$, $i=1,\dots,N$, solutions to (\ref{em1}) with initial data $u_i(0)=u_0(i\e)$ and $\dot u_i(0)=v_0(i\e)$. Then, for any $t\in \R$, 
\[\lim_{\e\to 0} W_\e(t)=0.\]
\end{theorem}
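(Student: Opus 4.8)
The plan is to run an energy-type stability estimate comparing the exact solution $u$ of (\ref{em}) with the piecewise-constant interpolant $u_\e$ of the lattice solution of (\ref{em1}), and to measure their discrepancy through the functional $W_\e(t)$ of (\ref{W}). This functional is engineered to match the conserved energies (\ref{EC})--(\ref{ECD}): it records the difference of the two fields in exactly the norm ($L^2$ of displacement, of velocity, and of the \emph{discrete} Laplacian) in which each flow conserves energy. The only analytic input needed from the PDE side is the regularity of $u$. By the standard energy method the Sobolev norms are propagated in time, so that $u(t)\in H^s$ with $s\ge6$ on any bounded interval, and applying $\|\cdot\|_\infty\le C\|\cdot\|_{H^1}$ to the derivatives upgrades this to $u(t)\in C^5(\mathcal C)$ uniformly in $t$; this is the only regularity the rest of the proof will consume.

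The heart of the argument is the computation of $\dot W_\e$ displayed in (\ref{wdot}). Substituting the two equations of motion $\ddot u=-\Delta^2u$ and $\ddot u_\e=-\Delta_\e^2u_\e$ produces four terms. The decisive structural step is that $\Delta_\e=D_\e^+D_\e^-$ is formally self-adjoint on $\mathcal C$: applying the discrete integration-by-parts identity (\ref{intbypartsper}) twice turns $\int(\dot u_\e-\dot u)\,\Delta_\e^2(u_\e-u)$ into $\int\Delta_\e(\dot u_\e-\dot u)\,\Delta_\e(u_\e-u)$, which cancels the last term of (\ref{wdot}) exactly. This is the discrete counterpart of the cancellation that keeps the continuous energy (\ref{EC}) constant, and it is what guarantees there is no loss of derivatives in the estimate.

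Only two benign contributions remain. The term $\int(\dot u_\e-\dot u)(u_\e-u)$ is bounded by $W_\e$ via $2ab\le a^2+b^2$. The other is the consistency term $\int(\dot u_\e-\dot u)(\Delta^2-\Delta_\e^2)u$; splitting it by the same inequality leaves a piece controlled by $W_\e$ together with the deterministic remainder $\tfrac12\int|(\Delta^2-\Delta_\e^2)u|^2$. This is the one place where genuine work is required: a Taylor expansion of the five-point stencil defining $\Delta_\e^2$ against the smooth function $u$ shows that $(\Delta^2-\Delta_\e^2)u\to0$ uniformly in $x$ and in $t$ on bounded intervals, the error being of order $\e$ and controlled by the $C^5$ norm of $u$. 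I denote the resulting vanishing quantity by $\om_\e$, so that altogether one obtains the differential inequality $\dot W_\e\le 2W_\e+\om_\e$.

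I would close with Gronwall's lemma, which gives $W_\e(t)\le W_\e(0)\,\mathrm{e}^{2t}+\om_\e\,t\,\mathrm{e}^{2t}$, and with the observation that $W_\e(0)\to0$: since the discrete data are the samplings $u_0(i\e),v_0(i\e)$, the quantity $W_\e(0)$ is an interpolation error that vanishes as $\e\to0$ by the regularity of $u_0,v_0$ assumed in (\ref{ass0}). With $t$ fixed both terms tend to zero, proving the claim. I expect the only real obstacle to be the consistency estimate for $(\Delta^2-\Delta_\e^2)u$: the stencil carries the prefactor $\e^{-4}$, so one must verify that the cancellation of the Taylor coefficients up to third order in the weights $(1,-4,6,-4,1)$, together with the vanishing of the odd-order contribution by the symmetry of those weights, promotes the accuracy enough that the net error is $O(\e)$ rather than $O(1)$ — and this is precisely why control of the derivatives of $u$ up to fifth order is needed. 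A Fourier diagonalization on $\mathcal C$ would give an alternative mode-by-mode comparison in this linear case, but the energy method is preferable here because it extends verbatim to the nonlinear $R\ne0$ and to the Dirichlet setting treated later.
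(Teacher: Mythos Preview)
Your proposal is correct and follows essentially the same approach as the paper: the same functional $W_\e$, the same cancellation via the discrete integration-by-parts identity (\ref{intbypartsper}), the same handling of the consistency term $(\Delta^2-\Delta_\e^2)u$ through the $C^5$ regularity of $u$, and the same Gronwall closure. Your additional commentary on the five-point stencil weights $(1,-4,6,-4,1)$ and why fifth-order control is exactly what is needed makes the consistency step more transparent than the paper's presentation, but the argument is the same.
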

\subsubsection{Dirichlet boundary conditions}\label{elastdir}
For the Dirichlet boundary conditions we replace the circle $\mathcal{C}$ with the interval ${I}=[0,L]$. The equation (\ref{em}) is well posed with the boundary conditions 
\begin{equation}
\label{bc}
u(0,t)=u'(0,t)=u(L, t)=u'(L,t)=0.
\end{equation}
{\bf Remark}: Several other boundary conditions may have an interest in engineering application and a physical meaning. For instance the conditions $u(0)=u''(0)=0$, $u(L)=u''(L)=0$, characterizes a beam with pivots applied at its endpoints, while the conditions which we considered here are relative to clamped-clamped beams. We do not consider in this paper the other possible boundary conditions, as the focus of this paper is different.

\bigskip
Again, by using the energy method, we can construct solution with $H^s$ regularity, by assuming
\be
\label{ass01}
u_0 \in H_0^2 \cap H^s, \quad v_0 \in H_0^2 \cap H^r \quad \text {with} \quad s\geq 6, r\geq 4.
\ee
Here $H^2_0$   (introduced to take into account the boundary conditions) is defined as the space of the $H^2$ functions vanishing in $0$ and $L$, together with their first derivative.

\bigskip

The corresponding discrete system is constituted by $N-3$ particles 
with coordinates $u_i$, $i=2, \dots,N-2$ and
\be \label{discbc}u_0=u_1=u_{N-1}=u_N=0\ee
are the constraints corresponding to the Dirichlet boundary conditions.  

With this position, the explicit equations of motion are 
\begin{equation}
\label{em1dir}
\ddot u_i=\frac 1 {\e^4} (-u_{i+2}+4 u_{i+1} -6u_i -u_{i-2}+4 u_{i-1})
\qquad i=2 \dots N-2,
\end{equation}
As before we introduce the left continuous step function
\begin{equation}\label{stepdir}
u_\e (x,t)=u_i(t) \quad \text {if} \quad x\in [i\e, (i+1) \e ),\quad i\in\{0,\dots, N-1\},\end{equation}
but we find convenient to think of it as a function on $\R$ extended with  value $0$ outside ${I}$.
Then (\ref{em1dir}) can be rewritten similarly to (\ref{em2}) as
\begin{equation}
\label{em2dir}
\ddot u_\e (x,t) =-\Delta_\e ^2u_\e(x,t)
\qquad x\in {I}_\e=(2\e, L-\e).\end{equation}
Note that the values of $u_i$ are frozen for $i=0,1,N-1,N$, so that $u_\e=0$ in ${I}^c_\e={I}-{I}_\e$.
We also think of the solution $u$ of the continuous equation as extended with value $0$ outside of ${I}$ 

Next we introduce the function $W_\e(t)$ as
\begin{equation}
\label{W1}
W_\e (t)= \frac 12 \int_ \R dx (\dot u_\e (x,t)-\dot u (x,t))^2+
\frac 12 \int_\R dx  [\Delta_\e ( u_\e (x,t)- u (x,t))]^2.
\end{equation}
Note that this function differs from the one defined by integrating on ${I}$ instead of $\R$  because $\Delta_\e$ is non-local. It is actually larger and hence provides a stronger control of the convergence.
Now we compute again the time derivative of $W$, as before and we get
\be\label{wdotdir}
\dot W_\e(t)= \int_\R dx (\dot u_\e (x,t)-\dot u (x,t)) (\ddot u_\e (x,t)-\ddot u (x,t))+\int_\R dx  \Delta_\e ( u_\e (x,t)- u (x,t)) \Delta_\e ( \dot u_\e (x,t)-\dot u (x,t)). 
\ee
By using twice the discrete integration by parts formula
\[\int_\R dxfD^\pm_\e g=-\int_\R gD^\mp_\e f ,\]
valid of any couple of bounded compactly supported functions $f$ and $g$,  the second term becomes, as before
\[\int_\R dx  ( \dot u_\e (x,t)-\dot u (x,t))\Delta_\e^2 ( u_\e (x,t)- u (x,t)) .\]
As for the the first term, we need to use the equations of motion (\ref{em}) for $u$ and (\ref{em2dir}) for $u_\e$. Note that the last ones hold only in ${I}_\e$. Thus,
using that $\ddot u_\e=0$ in $\R-{I}_\e$, 
the first term becomes 
\begin{multline}-\int_{{I}_\e} dx (\dot u_\e (x,t)-\dot u (x,t)) (\Delta^2_\e u_\e (x,t)-\Delta^2 u (x,t))-\int_{\R-{I}_\e}(\dot u_\e(x,t)-\dot u(x,t))(-\Delta^2u(x,t))=\\
-\int_\R dx (\dot u_\e (x,t)-\dot u (x,t)) (\Delta^2_\e u_\e (x,t)-\Delta^2 u (x,t))+\int_{\R-{I}_\e}(\dot u_\e(x,t)-\dot u(x,t))\Delta_\e^2u_\e(x,t)).
\end{multline}
By adding and subtracting the term $\int_{\R} dx (\dot u_\e (x,t)-\dot u (x,t)) (\Delta^2_\e u_\e (x,t)-\Delta_\e^2 u (x,t))$ the above term becomes
\begin{multline}-\int_{\R} dx (\dot u_\e (x,t)-\dot u (x,t)) (\Delta^2_\e u_\e (x,t)-\Delta_\e^2 u (x,t))-\int_{\R} dx (\dot u_\e (x,t)-\dot u (x,t)) (\Delta_\e^2 u(x,t)-\Delta^2 u (x,t))\\+\int_{\R-{I}_\e}(\dot u_\e(x,t)-\dot u(x,t))\Delta_\e^2u_\e(x,t)).\end{multline}
Putting together all these terms we conclude that
\be\label{Wdotf}\dot W=-\int_{\R} dx (\dot u_\e (x,t)-\dot u (x,t)) (\Delta_\e^2 u(x,t)-\Delta^2 u (x,t))+\int_{\R-{I}_\e}(\dot u_\e(x,t)-\dot u(x,t))\Delta_\e^2u_\e(x,t))\ee
The first term in the right hand side of (\ref{Wdotf}) goes to $0$ as in the periodic case, by the regularity $u$. 
The second term is the novelty of the Dirichlet case.
In order to estimate it, note that $\dot u_\e=0$ outside of ${I}_\e$, hence we need to estimate $\int_{\R-{I}_\e}\dot u(x,t)\Delta_\e^2u_\e(x,t))$. 

By the boundary conditions on $u$, ($u=0$ and $u'=0$ in $0$ and $L$ for any $t$),  it results 
(by our assumptions $|\Delta\dot u(x,t) |$ is bounded) 
\be\label{u''}|\dot u(x,t)|\le \frac 1 2 \sup_{x\in \mathcal{I}} |\Delta\dot u(x,t) |\e^2\quad x\in {I}_\e^c.\ee
Furthermore
\[\Delta^2_\e u_\e(x)=\e^{-2}( \Delta_\e u_\e(x+\e)+\Delta_\e u_\e(x-\e)-2\Delta_\e u_\e(x)).\]
Rewriting the total energy (\ref{ECD}) in a more explicit form,
\be \mathcal{E}[u_\e]=\frac 12 \sum_{i=2}^{N-2} \e |\dot u_\e (\e i)|^2 + \frac 12 \sum _{i=1}^{N}\e 
|\Delta_\e u_\e(i\e)|^2,
\ee
we obtain, at any time and for any $x$ in $\mathcal{I}_\e$,
\be\label{boundLapl}|\Delta_\e u_\e(x)|\le \frac{\sqrt{2 E_0}}{\sqrt\e},\ee
where $E_0=\mathcal{E}(u(0))$ is the energy of the initial data.
Hence 
\be\label{delta2}\sup_{x\in {I}} |\Delta^2_\e u_\e(x)|\le {4 \sqrt{2E_0}}{\e^{-\frac 5 2}}.\ee
Combining (\ref{u''}) and (\ref{delta2})
and using  the fact that the integration is restricted to the set  
${I}-{I}_\e$, whose measure is $4\e$ (remind that  $\dot u=0$ outside $\mathcal{I}$), we conclude that 
\[\Big |\int_{\R-{I}_\e}\dot u(x,t)\Delta_\e^2u_\e(x,t))\Big|\le C\sqrt{\e}.\]
 
The rest of the argument proceeds as before and we conclude that $W_\e(t)\to 0$.

We summarize above discussion in the following
\begin{theorem}\label{elasticadir}
Suppose that $u_0$ and $v_0$  satisfy \eqref{ass01}.  Let $u(t)$   be the classical solution to (\ref{em}) with boundary conditions (\ref{bc}) and initial values (\ref{incond}) and $u_\e(t)$ be the step function defined by (\ref{stepdir}) with $u_i(t)$, $i=2,\dots,N-2$, solutions to (\ref{em1dir}) with initial data $u_i(0)=u_0(i\e)$ and $\dot u_i(0)=v_0(i\e)$. Then, for any $t\in \R$, 
\[\lim_{\e\to 0}W_\e(t)= 0.\]
\end{theorem}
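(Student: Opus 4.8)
The plan is to prove convergence through an energy-type comparison functional and a Gronwall argument, paralleling the periodic case while carefully tracking the boundary contribution produced by the clamping. The natural object is the functional $W_\e(t)$ defined in \eqref{W1}, in which both $u$ and $u_\e$ are extended by $0$ outside $I$ and the integration is carried over all of $\R$; since $\Delta_\e$ is nonlocal, this version dominates the one obtained by integrating only over $I$ and therefore gives a stronger control of the error in the velocity and in the discrete curvature $\Delta_\e$. The goal is to establish a differential inequality $\dot W_\e(t)\le C W_\e(t)+\om_\e$ with $W_\e(0)\to 0$, which by Gronwall forces $W_\e(t)\to 0$ for every fixed $t$.

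First I would differentiate $W_\e$ in time and insert the two equations of motion, \eqref{em} for $u$ and \eqref{em2dir} for $u_\e$, keeping in mind that the latter holds only on $I_\e$ while $\ddot u_\e=0$ on $\R-I_\e$. Applying the discrete integration by parts formula \eqref{intbypartsper} twice moves both discrete Laplacians onto the velocity difference, turning the curvature term into $\int_\R(\dot u_\e-\dot u)\,\Delta_\e^2(u_\e-u)$; substituting the accelerations and reorganizing, by adding and subtracting $\Delta_\e^2 u$ exactly as in the text, produces \eqref{Wdotf}. The first term there measures the consistency error $\Delta_\e^2 u-\Delta^2 u$ of the finite-difference scheme against the velocity difference, and by the propagated $H^s$ regularity of the continuous solution (one needs derivatives of $u$ up to fifth order, guaranteed by \eqref{ass01}) a Taylor expansion makes it $O(\om_\e)$, exactly as in the periodic computation; by Young's inequality its velocity factor is absorbed into $W_\e$.

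The genuinely new feature, and the main obstacle, is the boundary term $\int_{\R-I_\e}\dot u\,\Delta_\e^2 u_\e$ in \eqref{Wdotf}, which survives precisely because the discrete dynamics is valid only on $I_\e$. Here two competing scales must be balanced. On one hand $\Delta_\e^2 u_\e$ can be large near the boundary: energy conservation \eqref{ECD} yields the pointwise bound $|\Delta_\e u_\e|\le\sqrt{2E_0/\e}$ as in \eqref{boundLapl}, whence $\sup|\Delta_\e^2 u_\e|\le C\e^{-5/2}$ by \eqref{delta2}. On the other hand the clamped conditions \eqref{bc}, which give $\dot u=\partial_x\dot u=0$ at the endpoints, together with the boundedness of $\Delta\dot u$, force $\dot u$ to vanish quadratically in the boundary layer, $|\dot u|\le\tfrac12\sup|\Delta\dot u|\,\e^2$ on $I_\e^c$ as in \eqref{u''}. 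Since $|I-I_\e|=4\e$ and $\dot u_\e=0$ there, the product of these three scales gives $\bigl|\int_{\R-I_\e}\dot u\,\Delta_\e^2 u_\e\bigr|\le C\sqrt{\e}$, which vanishes as $\e\to0$.

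Collecting the pieces yields $\dot W_\e\le C W_\e+\om_\e$ with $C$ an absolute constant; the Gronwall lemma then produces $W_\e(t)\le (W_\e(0)+\om_\e\,t)\,e^{Ct}$. Since the initial curvature and velocity data are sampled from smooth functions, $W_\e(0)\to 0$, and the claim follows. I expect the only delicate point to be the boundary-layer balance $\e^{2}\cdot\e^{-5/2}\cdot\e=\e^{1/2}$: it is the quadratic vanishing of $\dot u$ at the clamped ends that exactly beats the $\e^{-5/2}$ growth of the fourth discrete difference on the boundary strip of width $O(\e)$.
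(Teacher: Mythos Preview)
Your proposal is correct and follows essentially the same route as the paper: you use the same extended functional $W_\e$ from \eqref{W1}, derive \eqref{Wdotf} via discrete integration by parts and the equations of motion, control the consistency term by the $C^5$ regularity of $u$, and carry out the identical boundary-layer power count $\e^{2}\cdot\e^{-5/2}\cdot\e=\e^{1/2}$ via \eqref{u''}, \eqref{boundLapl}, \eqref{delta2} before concluding with Gronwall. The only cosmetic slip is that the integration-by-parts identity you need here is the compactly supported version on $\R$ stated just after \eqref{wdotdir}, not the periodic \eqref{intbypartsper}.
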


\subsection{A  $n$-th gradient case}
Now we extend the previous argument to the more general setup corresponding to the Lagrangian (\ref{contlag0}),  restricting  the discussion  to the simpler case of periodic boundary conditions. The Dirichlet boundary conditions can be handled as in the previous subsection but we  avoid here unnecessary  complications.

We assume  the  following conditions:
\begin{enumerate}
\item
\be \label{ann}A_{0}>0,\quad A_n>0,\quad A_\a\ge 0, \quad \a=1,\dots,n-1\ee
\item  We have already supposed that $R(\underline{0})=0$ and $R(\underline{\x})=O(|\underline{\x}|^3)$. In addition we assume that, for $n=1$, $R$ depends only on $u$ and, for $n\ge 2$, $R$ depends only on $u$ and $Du$.
Moreover  we assume  $R \in C^{2n+2}(\R^2)$. 
\end{enumerate}

\bigskip
\noindent{\bf Remark 3.1:}  The positivity assumptions on the $A_\a$'s with $\a=1,\dots,n-1$, can be relaxed. 
In facts, 
let us define, for some $\e_0 >0$,
\be 
\kappa=\sup_{\e \in (0,\e_0) }\sup_{u: \|D_\e u\|_2\le 1}\frac {\|u\|^2_2}{\|D_\e u\|^2_2},
\ee
with the supremum on $u$ taken on all $u$ with $0$ average. Then it is enough to assume
\be\label{assa1}
\sum_{\substack  {\alpha=1 : A_\a<0}}^{n-1} |A_{\alpha}|\kappa^{n-\a}\le \frac 1 2 A_{n},
\ee
to make the argument of the proof   still working.
This remark allows us to consider, for instance, the case $\ddot u= (-\Delta ^2-\gamma \Delta)u $, with $\gamma$  sufficiently small, excluded by (\ref{ann}).
\medskip

\noindent{\bf Remark 3.2:} The assumption on $R$ concerning its   dependence  on $u$ and $Du$ only, is restrictive. We do not expect any surprise in assuming an explicit dependence on some higher derivatives.  However, as we shall see in the course of the proof,  more general assumptions would complicate the algebraic manipulations in dealing with the discrete derivatives  in a consistent way.

As regards the initial data we assume
\be
\label{ass02}
u_0 \in H^{2n+2} , \quad v_0 \in H^{n+2},
\ee
and, as before, the $H^s$ regularity is propagated. Clearly  $u  \in C^{2n+1}(\mathcal{C})$.

\bigskip
The explicit equation is 
\begin{equation}\ddot u+\sum_{\alpha=0}^n  (-1)^\a A_{\alpha}\D^\a u+\partial_{\x_0}R(u,Du)-D[\partial_{\x_1}R(u,Du)]=0.\label{conteq}\end{equation}

Note that thanks to  the energy conservation,
\be\label{conten}\mathcal{E}[u]=\int_{\mathcal{C}} dx 
\Big[\frac 1 2 \Big\{ \dot u^2+\sum_{\alpha=0}^n A_{\alpha}| D^\alpha u|^2  \Big\} + R(u,Du),\Big]
\ee
we get immediately an a priori bound on the $L^2$ norm of $u$, $\dot u$ and $D^n u$:
\be      
\label{contenerbound}
\frac 1 2 \int_{\mathcal{C}}dx \Big[ |\dot u|^2+A_{0}|u|^2+ A_{n}|D^n u|^2\Big]\le \mathcal{E}[u(0)].
\ee

\bigskip
Now we remind the discrete counterpart of the above setup, which corresponds to  the discrete Lagrangian (\ref{discrLagra0}). Using the discontinuous function  
\begin{equation}
u_\e (x,t)=u_i(t) \quad \text {if} \quad x\in [i\e, (i+1) \e ),
\end{equation}
as in the previous section, the discrete Lagrangian can be written as
\be\label{genlagr}
\mathcal{L}_\e=\int_{\mathcal{C}} dx\Big[\frac 1 2 |\dot u_\e(x,t)|^2-\frac 1 2 \sum_{\a=0}^n A_\a |D_\e^\a u_\e(x,t)|^2 - R(u_\e(x,t), D^+_\e u_\e(x,t))\Big].
\ee

We can write the associated equations of motion in terms of $u_\e$ as
\be\label{discreq}
\ddot u_\e+\sum_{\alpha=0}^n  (-1)^\alpha A_{\alpha}\D_\e^{\alpha} u_\e+ \partial_{\x_0} R(u_\e, D^+_\e u)-D^-_\e[\partial_{\x_1} R(u_\e,D^+_\e u_\e)]=0.
\ee

Also for the discrete system the energy conservation holds. Thus  we have
that 
\be\label{enerdiscr}\mathcal{E}_\e[u_\e]=\int_{\mathcal{C}} dx 
\Big[\frac 1 2 \Big\{ \dot u_\e^2+\sum_{\alpha=0}^n A_{\alpha}|D_\e^\alpha u|^2\Big\} + R(u_\e,D^+_\e u_\e),\Big]
\ee
is conserved and hence,  using  that  $R\ge 0$,  we have the inequality
\be\label{discrenerbound}\frac 1 2 \int_{\mathcal{C}}dx \Big[ |\dot u_\e|^2+A_{0}|u_\e|^2+ A_{n}|D_\e^n u_\e|^2\Big]\le \mathcal{E}_\e[u_\e(0)].\ee
Since $A_{0}>0$ and $A_{n}>0$, the existence, globally in time, for the solution to the discrete system follow from this bound.

\bigskip

We start by proving the convergence of the discrere system to the continuous one in the linear case, namely when $R=0$,  
\be\label{conteq1}\ddot u+\sum_{\alpha=0}^n  (-1)^\a A_{\alpha}\D^{\alpha} u=0.\ee

Similarly, the discrete system becomes
\be\label{discreq1}\ddot u_\e+\sum_{\alpha=0}^n  (-1)^\a A_{\alpha}\D_\e^{\alpha} u_\e=0.
\ee

We introduce
\begin{multline}\label{W2*}
W_\e(t)=\frac 12 \int_{\mathcal{C}} dx \Big\{ |u(x,t)-u_\e(x,t)|^2+ |\dot u(x,t)-\dot u_\e(x,t)|^2 \Big\}+  \int_{\mathcal{C}} dx\sum_{\a=0}^nA_{\a} |D^\a_\e [u(x,t)-u_\e(x,t)] |^2 .
 \end{multline}

The time derivative of $W$ is:
\begin{eqnarray}
\frac{d}{dt}W_\e&&=\int_{\mathcal{C}} dx
\Big\{
(\dot u-\dot u_\e) ( u-u_\e+\ddot u-\ddot u_\e) +\sum_{\a=0}^n A_\a D_\e^\a(\dot u- \dot u_\e) D^\a_\e (u-u_\e)
\Big\}\nn
\\&&=
\int_{\mathcal{C}} dx   (\dot u-\dot u_\e) \Big\{ ( u-u_\e+\ddot u-\ddot u_\e) +\sum_{\a=0}^n (-1)^\a A_\a \D^\a_\e (u-u_\e)\Big\}\nn
\\&&=
\int_{\mathcal{C}} dx(\dot u-\dot u_\e)\Big\{( u-u_\e) +\sum_{\a=0}^n (-1)^\a A_\a \Big[
\D^\a_\e u_\e -\D^\a u -\D^\a_\e u_\e+\D^\a_\e u
\Big]\Big\}
\\&&=
\int_{\mathcal{C}} dx(\dot u-\dot u_\e)\Big\{ u-u_\e+ \sum_{\a=0}^n(-1)^\a A_\a (\D_\e^\a u-\D^\a u) \Big\}.\nn
\end{eqnarray}
In the second step we have integrated by parts $\alpha$ times, in the third we have used the equations of motion. The last step follows by canceling two equal terms with opposite sign.

Hence the linear case goes exactly as in previous subsection, because $|\D^\a_\e  u-\D^\a u|\le \omega_\e$ for $\a\le n$ and $u\in C^{2n+1}(\mathcal{C})$.

We summarize the results  for the linear case in the following 
\begin{theorem}
Assume  $R=0$ and 
suppose that $u_0$ and $v_0$ satisfy \eqref{ass02}.
Let $u(t)$   be the classical solution to (\ref {conteq}) and $u_\e(t)$ be the step function defined by (\ref{step}) with $u_i(t)$, $i=1,\dots,N$, solutions to (\ref{conteq1}) with initial data $u_i(0)=u_0(i\e)$ and $\dot u_i(0)=v_0(i\e)$. 

Then,
for any $t\in [0,T]$, 
\[\|u(t)-u_\e(t)\|_\e\to 0, \quad \text{ as }\e \to 0,\]
where $\| \,\cdot\,\|_\e$ is the $\e$-dependent  norm defined by 
\be\label{norma} \|u\|_\e^2=\int_{\mathcal{C}} dx \Big\{|\dot u|^2+|u|^2+ \sum_{k=1}^n |D_\e^k u|^2\Big\}.\ee\end{theorem}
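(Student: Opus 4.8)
The plan is to close a Gronwall estimate on the energy-type functional $W_\e(t)$ of \eqref{W2*}, exactly as in the $\D^2$ case treated above, and then to upgrade the resulting decay of $W_\e$ into decay of the norm $\|\cdot\|_\e$ of \eqref{norma} by means of a discrete interpolation inequality. The first observation I would record is which pieces of $\|u-u_\e\|_\e^2$ are already controlled by $W_\e$: since $A_0>0$, $A_n>0$ and $A_\a\ge 0$ by \eqref{ann}, the functional $W_\e$ dominates $\|u-u_\e\|_2^2$ and $\|\dot u-\dot u_\e\|_2^2$ (from the kinetic term and the $\tfrac12|u-u_\e|^2$ term) together with $A_n\|D_\e^n(u-u_\e)\|_2^2$. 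The intermediate discrete derivatives $\|D_\e^k(u-u_\e)\|_2$ with $1\le k\le n-1$ need not occur in $W_\e$, because the corresponding $A_\a$ may vanish, so they will have to be recovered afterwards.

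Next I would estimate $\dot W_\e$. The differentiation, the discrete integration by parts \eqref{intbypartsper}, and the substitution of the equations \eqref{conteq1} and \eqref{discreq1} are already carried out in the lines preceding the statement and leave
\[
\dot W_\e=\int_{\mathcal{C}} dx\,(\dot u-\dot u_\e)\Big\{(u-u_\e)+\sum_{\a=0}^n(-1)^\a A_\a(\D_\e^\a u-\D^\a u)\Big\}.
\]
I would then bound the right-hand side by Cauchy--Schwarz: the term $\int(\dot u-\dot u_\e)(u-u_\e)$ is $\lesssim W_\e$, while each $\int(\dot u-\dot u_\e)(\D_\e^\a u-\D^\a u)$ is controlled by $\tfrac12\|\dot u-\dot u_\e\|_2^2+\tfrac12\|\D_\e^\a u-\D^\a u\|_2^2$, the first summand again being $\lesssim W_\e$ and the second a pure consistency error. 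Here the regularity assumption \eqref{ass02} enters decisively: the $H^{2n+2}$ regularity of $u_0$ propagates, so $u(t)\in H^{2n+2}\subset C^{2n+1}(\mathcal{C})$, and a Taylor expansion of the finite-difference powers gives $\|\D_\e^\a u-\D^\a u\|_2\le \omega_\e\to 0$ for every $\a\le n$. This yields $\dot W_\e\le C\,W_\e+\omega_\e$, and Gronwall delivers $W_\e(t)\le \mathrm{e}^{Ct}\big(W_\e(0)+\omega_\e\, t\big)$.

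To finish I would verify $W_\e(0)\to 0$ and then bridge the missing derivatives. The initial step function samples the smooth data, so $\|u_0-u_\e(0)\|_2$ and $\|v_0-\dot u_\e(0)\|_2$ vanish by uniform continuity; for the derivative pieces one uses that $D_\e^\a u_\e(0)$ is itself a step function whose lattice values coincide with those of $D_\e^\a u_0$, together with the fact that $D_\e^\a u_0$ is Lipschitz uniformly in $\e$ (its $x$-derivative equals $D_\e^\a u_0'$, bounded because $u_0\in C^{2n+1}$), whence $\|D_\e^\a(u_0-u_\e(0))\|_2\le C\e\to 0$. Thus $W_\e(t)\to 0$ for each fixed $t$. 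Finally, to pass from $W_\e(t)\to 0$ to $\|u(t)-u_\e(t)\|_\e\to 0$ I would invoke the uniform-in-$\e$ discrete interpolation inequality
\[
\|D_\e^k v\|_2\le \|v\|_2^{\,1-k/n}\,\|D_\e^n v\|_2^{\,k/n},\qquad 0\le k\le n,
\]
which on the periodic lattice follows at once from the spectral identity $\|D_\e^k v\|_2^2=\sum_\xi \lambda_\e(\xi)^k|\hat v(\xi)|^2$, with $\lambda_\e(\xi)=2(1-\cos\e\xi)/\e^2\ge 0$ the symbol of $-\D_\e$, and Hölder with exponents $n/k$ and $n/(n-k)$. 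Applied to $v=u(t)-u_\e(t)$, and using that $W_\e$ controls both $\|v\|_2$ and $\|D_\e^n v\|_2$, this forces every intermediate $\|D_\e^k v\|_2\to 0$, so all terms of $\|v\|_\e^2$ in \eqref{norma} vanish.

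\textbf{Main obstacle.} The Gronwall loop itself is routine once the $\D^2$ case is in hand; the genuine content lies in the two structural inputs that make it close and yield the stated norm. These are the uniform consistency estimate $\|\D_\e^\a u-\D^\a u\|_2\to 0$ for all $\a\le n$, which is precisely what forces the regularity threshold $H^{2n+2}$ in \eqref{ass02}, and the discrete interpolation inequality needed to restore the intermediate derivatives absent from $W_\e$. Of these, controlling the consistency error for the high powers $\D_\e^\a$ with $\a$ near $n$ is the most delicate point, since each extra application of the discrete Laplacian carries a factor $\e^{-2}$ that must be absorbed by two additional derivatives of the smooth solution.
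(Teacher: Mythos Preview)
Your argument is correct and follows the paper's own route: define the energy functional $W_\e$ of \eqref{W2*}, compute $\dot W_\e$ via discrete integration by parts and the two equations of motion, reduce to a consistency error $\D_\e^\a u-\D^\a u$ controlled by the $C^{2n+1}$ regularity coming from \eqref{ass02}, and close by Gronwall. This is exactly what the paper does.

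Where you go beyond the paper is in the last step. The paper states the conclusion in the norm \eqref{norma}, which contains \emph{all} intermediate discrete derivatives $D_\e^k(u-u_\e)$, $1\le k\le n$, but $W_\e$ only weights the $\a$-th derivative by $A_\a$, and under \eqref{ann} the intermediate $A_\a$ may vanish; the paper does not comment on how those pieces of the norm are then recovered. Your discrete interpolation inequality (proved spectrally via the symbol $\lambda_\e(\xi)=2(1-\cos\e\xi)/\e^2$ of $-\D_\e$ on the circle and H\"older) is a clean and correct way to fill this gap, bounding $\|D_\e^k v\|_2$ by $\|v\|_2$ and $\|D_\e^n v\|_2$, both of which $W_\e$ controls thanks to $A_0,A_n>0$. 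So your proof is the paper's proof plus an explicit justification of a step the paper leaves tacit.
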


\bigskip

Next we consider the nonlinear case. 
Now the equations of motion are (\ref{conteq}) and (\ref{discreq}) for the continuous and discrete system respectively.  Defining $W_\e$ by \eqref{W2*},
by the same computation, we have, again using the summation by parts formula, 
\begin{eqnarray}\label{Wdotnon}
\frac{d}{dt}W_\e&&=\int_{\mathcal{C}} dx
(\dot u-\dot u_\e)\Big\{( u-u_\e+\ddot u-\ddot u_\e\Big\} +\sum_{\a=0}^n A_\a D_\e^\a(\dot u- \dot u_\e)(D^\a_\e (u-u_\e)
\nn
\\&&=
\int_{\mathcal{C}} dx(\dot u-\dot u_\e)\Big\{( u-u_\e+\ddot u-\ddot u_\e) +\sum_{\a=0}^n (-1)^\a A_\a \D^\a_\e (u-u_\e)\Big\}\nn
\\&&=
\int_{\mathcal{C}} dx(\dot u-\dot u_\e)\Big\{( u-u_\e) +\sum_{\a=0}^n (-1)^\a A_\a \Big[
\D^\a_\e u_\e -\D^\a u -\D^\a_\e u_\e+\D^\a_\e u
\Big]\\&&+\Big[-\partial_{\x_0}R(u,Du)+D\partial_{\x_1}R(u,Du)+\partial_{\x_0}R(u_\e,D^+_\e u_\e)-D^-_\e\partial_{\x_1}R(u_\e,D^+_\e u_\e)
\Big]\Big\}\nn
\\&&=
\int_{\mathcal{C}} dx(\dot u-\dot u_\e)\Big\{ u-u_\e+ \sum_{\a=0}^n(-1)^\a A_\a (\D_\e^\a u-\D^\a u) \nn
\\&&+\Big[-\partial_{\x_0}R(u,Du)+D\partial_{\x_1}R(u,Du)+\partial_{\x_0}R(u_\e,D^+_\e u_\e)-D^-_\e\partial_{\x_1}R(u_\e,D^+_\e u_\e)
\Big]\Big\}\nn
\end{eqnarray}

To control the non-linear terms we proceed by estimating:
\be
\label{T1}
T_1= \partial_{\x_0} R(u,Du)-\partial_{\x_0}R(u_\e, D^+_\e u_\e)=T_1^1+T_1^2
\ee
and
\be
\label{T2}
T_2=D^-_\e [{\partial_{\x_1} R}(u_\e,{D^+_\e }u_\e)]-{D}[{\partial_{\x_1} R}(u,{D}u )] =T_2^1+T_2^2
\ee
where 
\be
\label{T11}T_1^1
=
\partial_{\x_0} R(u,Du)-\partial_{\x_0}R(u, D^+_\e u),
\ee

\be
\label{T12}T_1^2=
\partial_{\x_0} R(u,D^+_\e u)-\partial_{\x_0}R(u_\e, D^+_\e u_\e),
\ee

\be
\label{T21}
T_2^1={D^-_\e}[{\partial _{\x_1}R}(u,{D^+_\e }u)]-{D}[{\partial_{\x_1} R}(u,{D }u)], 
\ee
and
\be
\label{T22}
T_2^2={D^-_\e}[{\partial_{\x_1} R}(u_\e,{D^+_\e }u_\e)]-{D^-_\e}[{\partial_{\x_1} R}(u,{D^+_\e }u)].
\ee

The bound (\ref{contenerbound}) and Poincar\'e inequality imply that the $L^\infty$ norms of $u$, $Du$ and $D^\pm_\e u$ are bounded uniformly in $\e$. Thus, by the local Lipschitz continuity of $\partial_{\x_0} R$, we have
\[| T_1^1| \le C|Du-D^+_\e u|\le \omega_\e,\]
by the regularity of $u$.
Thus, by the energy bounds (\ref{contenerbound}) and (\ref{discrenerbound}) we get
\[\Big|\int_{\mathcal{C}} dx(\dot u-\dot u_\e)T_1^1 \Big|\le C[\mathcal{E}(u(0))+\mathcal{E}_\e(u_\e(0))]^{\frac 1 2}\omega_\e,\]

To control $T_1^2$  we need $L^\infty$ bounds for $u_\e$ and $D_\e u_\e$. They follow from the conservation of the energy for the discrete system by means of the following 
\begin{lemma}\label{sobo}
Let $f$ be a step function on $\mathcal{C}$  left continuous in the points $i\e$. Suppose that 
\[\|f\|_{H_\e^1}^2=\int_{\mathcal{C}} dx( |f|^2+|D^+_\e f|^2)\]
is bounded. Then 
\[\|f\|_\infty\le  C\|f\|_{H_\e^1}.\]
\end{lemma}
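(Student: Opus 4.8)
This is the discrete analogue of the one-dimensional Sobolev embedding $H^1(\mathcal{C})\hookrightarrow L^\infty(\mathcal{C})$, so the plan is to mimic the elementary continuous proof — control of the maximum by the mean plus the total oscillation — at the level of the lattice values $f_i$, while tracking all powers of $\e$ so that the constant $C$ comes out independent of $\e$. Throughout I write $f_i$ for the constant value of $f$ on $[i\e,(i+1)\e)$, so that $\|f\|_\infty=\max_i|f_i|$.

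First I would rewrite both norms as Riemann sums over the lattice. Using $N\e=L$ one has $\|f\|_2^2=\e\sum_i |f_i|^2$ and $\|D^+_\e f\|_2^2=\e^{-1}\sum_i|f_{i+1}-f_i|^2$; equivalently $\sum_i|f_i|^2=\e^{-1}\|f\|_2^2$ while $\sum_i|f_{i+1}-f_i|^2=\e\,\|D^+_\e f\|_2^2$. These two identities are what will feed the oscillation estimate.

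The key step is to bound the total oscillation of $f^2$ around the circle. Starting from the telescoping identity $f_{k+1}^2-f_k^2=(f_{k+1}+f_k)(f_{k+1}-f_k)$, summing over one full loop, applying Cauchy--Schwarz and the elementary bound $(|f_{k+1}|+|f_k|)^2\le 2(|f_{k+1}|^2+|f_k|^2)$, I would obtain $\sum_k|f_{k+1}^2-f_k^2|\le 2\big(\sum_i|f_i|^2\big)^{1/2}\big(\sum_k|f_{k+1}-f_k|^2\big)^{1/2}$. Substituting the two identities above, the factors $\e^{-1/2}$ and $\e^{1/2}$ cancel exactly, turning the right-hand side into $2\|f\|_2\,\|D^+_\e f\|_2\le \|f\|_{H^1_\e}^2$. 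This cancellation is the whole point — it is what forces the final constant to be $\e$-independent — and it is the step I would watch most carefully.

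To conclude, let $i_0$ realize $\|f\|_\infty^2=f_{i_0}^2$. For every $j$ the lattice path from $j$ to $i_0$ sits inside the full loop, so $f_{i_0}^2\le f_j^2+\sum_k|f_{k+1}^2-f_k^2|\le f_j^2+\|f\|_{H^1_\e}^2$. Averaging over $j\in\{1,\dots,N\}$ (equivalently, taking $j$ where $|f_j|$ is minimal) replaces $f_j^2$ by the mean $\tfrac1N\sum_j f_j^2=\tfrac1{N\e}\|f\|_2^2=\tfrac1L\|f\|_2^2$, whence $\|f\|_\infty^2\le \tfrac1L\|f\|_2^2+\|f\|_{H^1_\e}^2\le\big(1+\tfrac1L\big)\|f\|_{H^1_\e}^2$. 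This gives the claim with $C=\sqrt{1+1/L}$. The periodic (circle) structure enters only to make the telescoping loop well defined and to identify the mean of the $f_j^2$ with $L^{-1}\|f\|_2^2$; the sole genuine subtlety is the uniformity in $\e$, which is secured by the $\e$-bookkeeping in the oscillation estimate.
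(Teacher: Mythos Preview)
Your proof is correct and follows essentially the same route as the paper's: telescope $f^2$ along the lattice, bound the total oscillation of $f^2$ by $\|f\|_{H^1_\e}^2$, and anchor at a point where $f^2$ is no larger than its average. The only cosmetic differences are that the paper bounds each increment $|f_{k+1}^2-f_k^2|$ termwise via the AM--GM inequality (rather than applying Cauchy--Schwarz to the whole sum) and picks a single anchor point $x_0$ with $|f(x_0)|^2\le |\mathcal{C}|^{-1}\|f\|_2^2$ (rather than averaging over all $j$); both lead to the same constant $C=\sqrt{1+1/L}$.
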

\begin{proof}
Let $x_0=i_0\e $ be any point such that $|f(x_0)|^2\le \frac 1{|\mathcal{C}|}\int_{\mathcal{C}} dx|f|^2$. Note that such a point does exist otherwise we would obtain a contradiction ($\int_{\mathcal{C}}dx|f|^2>\int_{\mathcal{C}}dx|f|^2$). For $x=(i_0+k)\e$ we have
\[f^2(x)=f^2(x_0)+\sum_{h=0}^{k-1} [f^2(x_0+(h+1) \e)-f^2(x_0+h\e)].\]
Since 
\begin{multline*}|f^2(x+ \e)-f^2(x)|=|[f(x+ \e)+f(x)][f(x+\e)-f(x)]=\e[f(x+ \e)+f(x)]D^+_\e f|\\\le \frac 1 2 \e[f(x+ \e)+f(x)]^2+ \frac 1 2 \e|D^+_\e f|^2,\end{multline*}
we conclude that
\[|f^2(x)|\le \left(\frac 1{|\mathcal{C}|}+1\right)\int_{\mathcal{C}} dx|f|^2+\frac 1 2\int_{\mathcal{C}} dx|D^+_\e f|^2\le\left(\frac 1{|\mathcal{C}|}+1\right)\|f\|_{H_\e^1}^2.
 \]
\end{proof}

Lemma \ref{sobo} and the energy bound (\ref{discrenerbound}) imply that the $L^\infty$ norms of $u_\e$ and $D^+_\e u_\e$ are bounded uniformly in $\e$. Thus we can use the Lipschitz continuity of $\partial_{\x_0} R$ to get:
\[\Big|\int_{\mathcal{C}} dx(\dot u-\dot u_\e) T_1^2 \Big|\le K W_\e,\]
with $K$ the Lipschitz constant of $\partial_{\x_0}R$ in the ball of radius $\max\{\|u\|_\infty, \|D u\|_\infty,\|u_\e\|_\infty, \|D_\e u_\e\|_\infty\}$.

The bound of $T_2$, involving discrete derivatives, requires the following chain rule formula for the discrete derivative of a composite function:
\begin{lemma}\label{chain}
If $f$ has continuous first derivative $f'$, then for any function $g$ and for any $x$ there exist $\lambda_{\e,x}\in (0,1)$ such that
\[D^\pm_\e f(g(x))=f'(\zeta_\e(x))D^\pm_\e g(x), \quad \text{with }\zeta_\e(x)=g(x)+\e \lambda_{\e,x} D^\pm_\e g(x)\]
\end{lemma}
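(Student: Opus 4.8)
The statement is a direct consequence of the classical mean value theorem; the only real work is to rewrite the intermediate point in the required form. The plan is to unfold the definition of the discrete derivative of the composite map $x\mapsto f(g(x))$ and then apply the mean value theorem to $f$ between the two sampled values.

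First I would treat the $+$ case. By the definition of the right discrete derivative applied to the composite function,
\[
D^+_\e f(g(x))=\frac{1}{\e}\big[f(g(x+\e))-f(g(x))\big].
\]
Since $f$ is $C^1$, the mean value theorem applied to $f$ on the interval with endpoints $g(x)$ and $g(x+\e)$ furnishes a point $\zeta$ strictly between them with
\[
f(g(x+\e))-f(g(x))=f'(\zeta)\,\big(g(x+\e)-g(x)\big).
\]
Dividing by $\e$ and recognizing that $\frac{1}{\e}\big(g(x+\e)-g(x)\big)=D^+_\e g(x)$ yields $D^+_\e f(g(x))=f'(\zeta)\,D^+_\e g(x)$. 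Finally, since $\zeta$ lies between the two endpoints, I can write $\zeta=g(x)+\lambda_{\e,x}\big(g(x+\e)-g(x)\big)$ for some $\lambda_{\e,x}\in(0,1)$, and using once more $g(x+\e)-g(x)=\e D^+_\e g(x)$ gives $\zeta=g(x)+\e\lambda_{\e,x}D^+_\e g(x)$, exactly the asserted form.

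The $-$ case is entirely analogous: expanding $D^-_\e f(g(x))=\frac{1}{\e}\big[f(g(x))-f(g(x-\e))\big]$ and applying the mean value theorem on the interval with endpoints $g(x-\e)$ and $g(x)$ produces the factor $f'(\zeta)\,D^-_\e g(x)$, with the intermediate point re-expressed through $g(x)$ and $\e D^-_\e g(x)=g(x)-g(x-\e)$. There is no genuine obstacle anywhere in the argument; the only point demanding a little care is the bookkeeping of which interval $\zeta$ belongs to, and the corresponding placement of signs when re-expressing $\zeta$ in terms of $D^-_\e g(x)$. I would stress that the hypothesis $f\in C^1$ is precisely what makes the mean value theorem applicable, while no regularity of $g$ is required — this is why the lemma is stated for an arbitrary $g$ and can subsequently be applied with $g$ a (non-smooth) step function in the estimates of the nonlinear term $T_2$.
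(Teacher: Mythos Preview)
Your proof is correct and follows essentially the same approach as the paper's: both unfold the discrete derivative, apply the mean value theorem to $f$ on the interval with endpoints $g(x)$ and $g(x\pm\e)$, and then rewrite the intermediate point in terms of $D^\pm_\e g(x)$. The only cosmetic difference is that the paper first passes through the integral representation $f(g(x+\e))-f(g(x))=\int_{g(x)}^{g(x+\e)} f'(z)\,dz$ before invoking the mean value theorem, whereas you apply Lagrange's form directly.
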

\begin{proof}By the mean value theorem, for $D^+_\e$ we have
\[D^+_\e f(g(x))=\e^{-1}[f(g(x+\e)-f(g(x)]=\e^{-1}\int_{g(x)}^{g(x+\e)}dz f'(z)=\e^{-1}[g(x+\e)-g(x)]f'(\zeta)\]
for a suitable $\zeta$ in the interval with extremes $g(x)$ and $g(x+\e)$: $\zeta=g(x)+\lambda_{\e,x}[g(x+\e)-g(x)]=g(x)+\e\lambda_{\e,x}D^+_\e g(x)$ for some $\lambda_{\e,x}\in (0,1)$. In the same way the statement for $D^-_\e$ follows.
\end{proof}

\bigskip
By the chain rule,
$$
T_2^1=
{\partial^2 _{\x_0,\x_1}R}(\zeta_\e(x),D^+_\e u)D^-_\e u-{\partial^2 _{\x_0,\x_1}R}(u,Du)Du+
{\partial^2 _{\x_1^2}R}(u,\eta_\e(x))\D_\e u-{\partial^2 _{\x_1}R}(u,Du)\D u,
$$
where 
\[\zeta_\e(x)=u(x)+\e\lambda_{\e,x} D^-_\e u(x)\] and \[\eta_\e(x)= D_\e u(x)+\e\mu_{\e,x}\D_\e u(x),\] 
with $\lambda_{\e,x}\in(0,1)$, $\mu_{\e,x}\in (0,1)$. But
\begin{multline*}
{\partial^2 _{\x_0,\x_1}R}(\zeta_\e(x),D^+_\e u)D^-_\e u-{\partial^2 _{\x_0,\x_1}R}(u,Du)Du=\\{\partial^2 _{\x_0,\x_1}R}(u,Du)[D^-_\e u-D u]+  D^-_\e u\big[
{\partial^2 _{\x_0,\x_1}R}(\zeta_\e(x),D^+_\e u)-{\partial^2 _{\x_0,\x_1}R}(u,Du)
\big].
\end{multline*}
The smoothness of $u$ and $D_\e u$ and the Lipschitz continuity of $\partial^2_{\x_0,\x_1}R$ yield\[\big|
{\partial^2 _{\x_0,\x_1}R}(\zeta_\e(x),D^+_\e u)-{\partial^2 _{\x_0,\x_1}R}(u,Du)
\big|\le C|D^+_\e u -D u|)\le \omega_\e,\]
so also this term goes to $0$ by the regularity of $u$.

Similarly,
\begin{multline*}
{\partial^2 _{\x_1^2}R}(u,\eta_\e(x))\D_\e u-{\partial^2 _{\x_1^2}R}(u,Du)\D u\\={\partial^2 _{\x_1^2}R}(u,Du)[\D_\e u-\D u]+[{\partial^2 _{\x_1^2}R}(u,\eta_\e(x))-{\partial^2 _{\x_1^2}R}(u,Du)]\D_\e u
\end{multline*}
The first part goes to $0$ by the regularity of $u$. By the boundedness of $u$, $Du$, $D^\pm_\e u$ and $\D_\e u$, we can use the Lipschitz continuity of ${\partial^2 _{\x_1^2}R}$ to get the bound
\[\big|
{\partial^2 _{\x_1^2}R}(u(x),\eta_\e(x),)-{\partial^2 _{\x_1^2}R}(u(x),Du(x))
\big|\le K|\eta_{\e}(x)- Du(x)|.\]
Since $\eta_{\e}(x)- Du(x)= D^+_\e u(x)-Du(x) +\e\mu_{\e,x} \D_\e u(x)$,
\[|\eta_{\e}(x)- Du(x)|\le |D^+_\e u(x)- Du(x)|+ \e\mu_{\e,x} |\D_\e u(x)|,
\]
and hence
\[
| {\partial^2 _{\x_1^2}R}(u(x),\eta_\e(x))-{\partial^2 _{\x_1^2}R}(u(x),Du(x))||\D_\e u(x)|\le ( |D^+_\e u(x)- Du(x)|+ \e\mu_{\e,x} |\D_\e u(x)|)|\D_\e u(x)|
\]
But 
\[
|\D_\e u(x)|\le|(\D_\e-\D )u(x)|+|\D u(x)|.
\]
By the propagation of the initial regularity, $\|\D u(\cdot,t)\|_\infty $ is bounded for any $t\in (0,T)$. Therefore
 \[|{\partial^2 _{\x_1^2}R}(u(x),\eta_\e(x))\D_\e u(x)-{\partial^2 _{\x_1^2}R}(u(x),Du(x))||\D_\e u(x)|\le\omega_\e. \] 
As for the  term $T^2_2$, we use again the chain rule:
\begin{multline*}T^2_2={D^-_\e}[{\partial_{\x_1} R}(u_\e,{D^+_\e }u_\e)]-{D^-_\e}[{\partial_{\x_1} R}(u,{D^+_\e }u)]=\\
{\partial^2 _{\x_0,\x_1}R}(\zeta_\e(x),D^+_\e u_\e)D^-_\e u_\e-{\partial^2 _{\x_0,\x_1}R}(\tilde\zeta_\e(x),D^+_\e u)D^-_\e u+
{\partial^2 _{\x_1^2}R}(u_\e,\eta_\e(x))\D_\e u_\e-{\partial^2 _{\x_1^2}R}(u,\tilde\eta_\e(x))\D_\e u,
\end{multline*}
where
\[
\zeta_\e(x)=u_\e(x)+\e\lambda_{\e,x} D^-_\e u_\e(x),\quad \tilde\zeta_\e(x)=u(x)+\e\lambda_{\e,x} D^-_\e u(x)
\]
\[
\eta_\e(x)= D^+_\e u_\e(x)+\e\mu_{\e,x}\D_\e u_\e(x),\quad \tilde\eta_\e(x)= D^+_\e u(x)+\e\mu_{\e,x}\D_\e u(x).
\]  
We use  the energy bound and Lemma \ref{sobo} to get the boundedness of $\zeta_\e$ and $D^\pm_\e u_\e$ and thus the Lipschitz continuity of $\partial^2_{\x_0,\x_1}R$, so that
\begin{multline*}
\big|{\partial^2 _{\x_0,\x_1}R}(\zeta_\e,D^+_\e u_\e)D^-_\e u_\e-{\partial^2 _{\x_0,\x_1}R}(\tilde\zeta_\e,D^+_\e u)D_\e^-u\big|\\
\le |{\partial^2 _{\x_0,\x_1}R}(\tilde\zeta_\e,D^+_\e u)|\/\/|D^-_\e u_\e-D^-_\e u|+K |D^-_\e u_\e|\/\/|\zeta_\e(x)-\tilde\zeta_\e(x)|.
\end{multline*}
But 
\[
|\zeta_\e(x)-\tilde\zeta_\e(x)|=|u_\e(x)-u(x)|+\e(|D^-_\e u_\e|+|D^-_\e u|),
\]
so that 
\[\int_{\mathcal{C}}|dx |\dot u-\dot u_\e||{D^+_\e}[{\partial_{\x_1} R}(u_\e,{D^+_\e }u_\e)]-{D_\e}[{\partial_{\x_1} R}(u,{D^+_\e }u)]|\le
 CW_\e +\frac 1 2 \e^2(\mathcal{E}[u(0)] +\mathcal{E}_\e[u_\e(0)])
\]

The term ${\partial^2 _{\x_1^2}R}(u_\e,\eta_\e(x))\Delta_\e u_\e-{\partial^2 _{\x_1^2}R}(u,\tilde\eta_\e(x))\D_\e u$ is more delicate because,  in order to use Lipschitz continuity, we need to bound  $\eta_\e(x)$ and hence the supremum of  $\D_\e u_\e$. But  Lemma \ref{sobo} and energy conservation are not enough when $n=2$. However we need really to bound $\e \D_\e u_\e$ and we can take advantage of this extra $\e$.
 Indeed, by the energy conservation and the positivity assumptions on $R$ and $A$, 
\[\e A_{n,n} (D_\e^nu_\e(x))^2\le C,\]
and hence
\be\label{bou} |D^n_\e u_\e|\le \frac {C}{\sqrt {\e}}, \ee
implying that $|\eta_{\e,x}|\le C$ (if $n>2$ we get a better estimate). Thus we have
\begin{eqnarray*}
&&\big|{\partial^2 _{1}R}(u_\e,\eta_\e )\D_\e u_\e-{\partial^2 _{1}R}(u,\tilde\eta_\e)\D_\e u\big|\le
 |{\partial^2 _{\x_1^2}R}(u,\tilde\eta_\e(x))|\/\/|\D_\e u-\D_\e u_\e|\\&&
 +|{\partial^2 _{\x_1^2}R}(u,\tilde\eta_\e(x))
 -{\partial^2 _{\x_1^2}R}(u,\eta_\e(x))|\/\/|\D_\e u_\e|+|{\partial^2 _{\x_1^2}R}(u_\e,\eta_\e(x))|-{\partial^2 _{\x_1^2}R}(u,\eta_\e(x))|\/\/|\D_\e u_\e|.
\end{eqnarray*}But, by (\ref{bou}), 
\[
|\eta_\e(x)-\tilde\eta_\e(x)|=|D^-_\e (u_\e(x)-u(x))|+\e(|\D_\e u_\e|+|\D_\e  u|)\le \omega_\e.\]
Therefore, by the Cauchy-Schwartz inequality and conservation of energy, we obtain that
\[\int_{\mathcal{C}} dx |\dot u-\dot u_\e| \big|{\partial^2 _{\x_1^2}R}(u_\e,\eta_\e )\D_\e u_\e-{\partial^2 _{\x_1^2}R}(u,\tilde\eta_\e)\D_\e  u\big|\le CW +\omega_\e.\]
Collecting all the terms, we conclude that there is are constant $C>0$  such that 
\[ \frac d{dt} W_\e\le C W_\e + \omega_\e,\]
and hence, by Gronwall lemma
\[|W_\e(t)|\le |W_\e(0)|\text{\rm e}^{Ct}+t\text{\rm e}^{Ct}\omega_\e \quad \text{ as } \e\to 0.\]

We summarize the results in the following 
\begin{theorem}
Suppose that $u_0$ and $v_0$ satisfy \eqref{ass02}.  Let $u(t)$   be the solution to (\ref {conteq}) and $u_\e(t)$ be the step function defined by (\ref{step}) with $u_i(t)$, $i=1,\dots,N$, solutions to (\ref{conteq1}) with initial data $u_i(0)=u_0(i\e)$ and $\dot u_i(0)=v_0(i\e)$. 
Then  for any $t>0$, 
\[\|u(t)-u_\e(t)\|\to 0, \quad \text{ as }\e \to 0,\]
in the norm defined in (\ref{norma}).
\end{theorem}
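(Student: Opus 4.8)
The plan is to control $u-u_\e$ through the Lyapunov functional $W_\e(t)$ of \eqref{W2*}, which bundles the $L^2$ discrepancies of the displacements and the velocities together with the weighted discrete-derivative discrepancy $\int_{\mathcal{C}}\sum_{\a=0}^n A_\a|D_\e^\a(u-u_\e)|^2$. Because both systems conserve energy and $A_0,A_n>0$ with $R\ge 0$, the a priori bounds \eqref{contenerbound} and \eqref{discrenerbound} give uniform-in-$\e$ control of $\|u\|_2,\|\dot u\|_2,\|D^nu\|_2$ and of their discrete analogues. First I would differentiate $W_\e$, insert the equations of motion \eqref{conteq} and \eqref{discreq}, and integrate by parts by means of the discrete summation-by-parts identity \eqref{intbypartsper}, exactly as in the linear case. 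The leading linear contributions then cancel up to the consistency error $\sum_{\a=0}^n(-1)^\a A_\a(\Delta_\e^\a-\Delta^\a)u$, which is $\omega_\e$ since $u\in C^{2n+1}(\mathcal{C})$, plus a harmless term $u-u_\e$ that Cauchy--Schwarz absorbs into $W_\e$; what remains are the genuinely nonlinear discrepancies $T_1$ and $T_2$ produced by $R$.

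The core of the proof is to show that, after pairing with $\dot u-\dot u_\e$ and integrating, each nonlinear piece is either $\omega_\e$ or bounded by $CW_\e$, so that Gronwall can close the estimate. I would decompose $T_1$ and $T_2$ as in \eqref{T11}--\eqref{T22} into a \emph{consistency} part, in which an exact derivative ($Du$ or $\Delta u$) of the smooth continuous solution is replaced by its discrete analogue and which is therefore $\omega_\e$, and a \emph{stability} part, which compares $u$ with $u_\e$ and is therefore controlled by $W_\e$. The two devices needed are the discrete chain rule of Lemma \ref{chain}, which expresses the discrete derivative of $\partial_{\x_j}R\circ u_\e$ as $\partial^2 R$ at an intermediate argument times a discrete derivative, and the discrete Sobolev embedding of Lemma \ref{sobo}, which together with \eqref{discrenerbound} furnishes uniform $L^\infty$ bounds on $u_\e$ and $D_\e^+u_\e$ (the bounds on $u$, $Du$, $D_\e^\pm u$ following from \eqref{contenerbound} and regularity). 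Given these bounds, the local Lipschitz continuity of the derivatives of $R$ (which is $C^{2n+2}$) converts each difference of $\partial^2R$-values into the associated $L^2$ discrepancy, to be absorbed into $W_\e$.

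The one genuine obstacle, already present in the $\Delta^2$ case $n=2$, is the term $\partial^2_{\x_1^2}R(u_\e,\eta_\e)\Delta_\e u_\e-\partial^2_{\x_1^2}R(u,\tilde\eta_\e)\Delta_\e u$: to invoke Lipschitz continuity I must keep $\eta_\e=D_\e^+u_\e+\e\mu_\e\Delta_\e u_\e$ bounded, yet $\Delta_\e u_\e$ is \emph{not} uniformly bounded. The way out is that only the combination $\e\Delta_\e u_\e$ enters; since $u_\e$ is a step function, energy conservation yields the pointwise bound $|D_\e^nu_\e(x)|\le C\e^{-1/2}$ of \eqref{bou}, whence $\e\Delta_\e u_\e=\omega_\e$ when $n=2$ and is smaller still for $n>2$. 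This is precisely where the hypothesis that $R$ depend on $u$ and $Du$ only is essential, for it guarantees that no discrete derivative beyond $\Delta_\e u_\e$ occurs. Assembling all the estimates gives $\frac{d}{dt}W_\e\le CW_\e+\omega_\e$; Gronwall together with $W_\e(0)\to 0$ (from the regularity of the data) then yields $W_\e(t)\to 0$ for every fixed $t$.

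It remains to pass from $W_\e(t)\to 0$ to convergence in the norm \eqref{norma}. Since $A_0,A_n>0$, the functional $W_\e$ already dominates $\|\dot u-\dot u_\e\|_2$, $\|u-u_\e\|_2$ and $\|D_\e^n(u-u_\e)\|_2$; the intermediate discrete derivatives $\|D_\e^k(u-u_\e)\|_2$ with $1\le k\le n-1$, whose weights $A_k$ may vanish under \eqref{ann}, are then recovered by a discrete Gagliardo--Nirenberg interpolation between the $L^2$ and the top-order norms, both of which tend to $0$. This gives $\|u(t)-u_\e(t)\|_\e\to 0$ and completes the proof.
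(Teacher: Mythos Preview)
Your proposal is correct and follows essentially the same route as the paper: the same Lyapunov functional $W_\e$, the same decomposition of the nonlinear terms into the consistency pieces $T_1^1,T_2^1$ and the stability pieces $T_1^2,T_2^2$, the same use of Lemma~\ref{sobo} and Lemma~\ref{chain}, and the same handling of the delicate term via the bound \eqref{bou} to control $\e\Delta_\e u_\e$. Your final paragraph, recovering the intermediate discrete derivatives $\|D_\e^k(u-u_\e)\|_2$ for $1\le k\le n-1$ by interpolation when the corresponding $A_k$ vanish, is in fact a point the paper does not address explicitly; it is a welcome clarification rather than a deviation.
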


\bigskip

\bigskip

\end{document}